\documentclass[12pt]{article}
\usepackage{ifthen}

\voffset 0mm \topmargin 0mm \headheight 0mm \headsep 0mm
\textheight 230mm \footskip 10mm
\paperheight 11,00in 

\hoffset 0mm \oddsidemargin 5mm \evensidemargin 5mm
\textwidth 16cm \marginparsep 0cm \marginparwidth 0cm
\paperwidth 8,27in 

\makeatletter
\def\@seccntformat#1{\csname the#1\endcsname.\ } 
\def\@biblabel#1{#1.} 
\makeatother

 \newcommand{\qed}{\hfill$\triangle$}

\newif\ifNoRemark
\def\addtheorem#1#2#3#4{
\ifthenelse{\equal{#2}{}}{}%
{\ifthenelse{\expandafter\isundefined\csname the#2\endcsname}{\newcounter{#2}}{}}
\newenvironment{#1}[1][\global\NoRemarktrue]
{\par\addvspace{2mm plus 0.5mm minus 0.2mm}\noindent 
{\bf #3}\ifthenelse{\equal{#2}{}}{}%
{\refstepcounter{#2}{\bf ~\csname the#2\endcsname}}%
{\bf \vphantom{##1}\ifNoRemark.\ \else\ (##1).\fi}\begingroup #4}%
{\endgroup\par\addvspace{1mm plus 0.5mm minus 0.2mm}\global\NoRemarkfalse}
\expandafter\newcommand\csname b#1\endcsname{\begin{#1}}
\expandafter\newcommand\csname e#1\endcsname{\end{#1}}
}

\addtheorem{theorem}{thrm}{Theorem}{\sl}
\addtheorem{lemma}{lmm}{Lemma}{\sl}
\addtheorem{note}{}{Remark}{\rm}
\begin{document}

\title{On the structure of non-full-rank perfect codes}
\author{Olof Heden and Denis S. Krotov\thanks{This research collaboration was partially supported by a grant from Swedish Institute;
the work of the second author
was partially supported by the Federal Target Program ``Scientific and Educational Personnel of Innovation Russia'' for 2009-2013 (government contract No. 02.740.11.0429) and the Russian Foundation for Basic Research (grant 08-01-00673).}}
\maketitle
\begin{abstract} The Krotov combining construction of perfect $1$-error-correcting binary codes from 2000 and a theorem of Heden saying that every non-full-rank perfect $1$-error-correcting binary code can be constructed by this combining construction is generalized to the $q$-ary case. Simply, every non-full-rank perfect code $C$ is
the
union of a well-defined family of $\bar\mu$-components $K_{\bar\mu}$, where $\bar\mu$ belongs to an ``outer'' perfect code $C^\star$, and these components are at distance three from each other. Components from distinct codes can thus freely be combined to obtain new perfect codes. The Phelps general product construction of perfect binary code from 1984 is generalized to obtain $\bar\mu$-components, and new lower bounds on the number of perfect $1$-error-correcting $q$-ary codes are presented.
\end{abstract}
\section{Introduction}
Let $F_q$ denote the finite field with $q$ elements. A \emph{perfect $1$-error-correcting $q$-ary code} of \emph{length} $n$, for short here a \emph{perfect code}, is a subset $C$ of the direct product $F_q^n$, of $n$ copies of $F_q$, having the property that any element of $F_q^n$ differs in at most one coordinate position from a unique element of $C$.

The family of all perfect codes is far from classified or enumerated.
We will in this short note say something about the structure of these codes.
We need the concept of rank.

We consider $F_q^n$ as a vector space of dimension $n$ over the finite field $F_q$. The \emph{rank} of a $q$-ary code $C$, here denoted $\mathrm{rank}(C)$, is the dimension of the linear span $<C>$ of the elements of $C$. Trivial, and well known, counting arguments give that if there exists a perfect code in $F_q^n$ then $n=(q^m-1)/(q-1)$, for some integer $m$, and $|C|=q^{n-m}$. So, for every perfect code $C$,
$$
n-m\leq \mathrm{rank}(C)\leq n\;.
$$
If $\mathrm{rank}(C)=n$ we will say that $C$ has \emph{full rank}.

We will show that every non-full-rank perfect code is a union of so called
\emph{$\bar\mu$-components} $K_{\bar\mu}$, and that these components may
be enumerated by some other perfect code $C^\star$, i.e, $\bar\mu\in C^\star$.
Further, the distance between any two such components will be at least three.
This implies that we will be completely free to combine $\bar\mu$-components
from different perfect codes of same length, to obtain other perfect codes.
Generalizing a construction by Phelps of perfect 1-error correcting binary codes \cite{Phelps84},
we will obtain further $\bar\mu$-components.
As an application of our results we will be able to slightly improve
the lower bound on the number of perfect codes given in \cite{Los06}.

Our results generalize corresponding results for the binary case.
In \cite{Kro:2000:Comb} it was shown
that a binary perfect code can be constructed as the union of different subcodes
($\bar\mu$-components) satisfying some generalized parity-check property,
each of them being constructed independently or taken from another perfect code.
In \cite{Heden2002} it was shown
that every non-full-rank perfect binary code can be obtained by this
combining construction.

\section{Every non-full-rank perfect code is the union of $\bar\mu$-components}
We start with some notation.
Assume we have positive integers $n$, $t$, $n_1$, \ldots, $n_t$ such that $n_1+\dots+n_t\leq n$.
Any $q$-ary word $\bar x$ will be represented
in the block form
$\bar x=(\bar x_1\mid \bar x_2 \mid\ldots\mid\bar x_t\mid\bar x_0)=(\bar x_*\mid\bar x_0)$,
where $\bar x_i=(x_{i1},x_{i2},\ldots,x_{in_i})$, $i=0,1,\ldots,t$,
$n_0=n-n_1-\ldots-n_t$, $\bar x_*=(\bar x_1\mid \bar x_2 \mid\ldots\mid\bar x_t)$.
For every block
$\bar x_i$, $i=1,2,\dots,t$, we define $\sigma_i(\bar x_i)$ by
\[
\sigma_i(\bar x_i)=\sum_{j=1}^{n_i}\,x_{ij}\;,
\]
and, for $\bar x$,
$$
\bar\sigma(\bar x)=\bar\sigma(\bar x_*)=(\sigma_1(\bar x_1),\sigma_2(\bar x_2),\dots,\sigma_t(\bar x_t))
$$

Recall that the Hamming \emph{distance} $\mathrm{d}(\bar x,\bar y)$ between two words $\bar x$, $\bar y$ of the same length means the number of positions in which they differ.

A {\it monomial transformation} is a map of the space $F_q^n$ that
can be composed by a permutation of the set of coordinate positions
and the multiplication in each coordinate position with some non-zero
element of the finite field $F_q$.

A $q$-ary code $C$ is \emph{linear} if $C$ is a subspace of $F_q^n$. A linear perfect code is called a \emph{Hamming code}.

\begin{theorem}
Let $C$ be any non-full-rank perfect code $C$ of length $n=(q^m-1)/(q-1)$. To any integer $r<m$, satisfying
$$
1\leq r \leq n-\mathrm{rank}(C)\;,
$$
there is a $q$-ary Hamming code $C^\star$ of length $t=(q^{r}-1)/(q-1)$, such that for  some monomial transformation $\psi$
$$
\psi(C)=\bigcup_{\bar \mu\in C^\star}\, K_{\bar \mu}\;,
$$
where
\begin{equation}\label{eq:def-comp}
K_{\bar\mu}=\{(\bar x_1\mid\bar x_2\mid\dots\mid\bar x_t\mid \bar x_{0})\,:\,\bar\sigma(\bar x)=\bar\mu,\;\;\bar x_1,\bar x_2,\ldots,\bar x_t\in F_q^{q^{s}},\;\;\bar x_{0}\in C_{\bar \mu}(\bar x_*)\,\}\;
\end{equation}
for some family of perfect codes $C_{\bar\mu}(\bar x)$, of length $1+q+q^2+\dots+q^{s-1}$, where $s=m-r$, and satisfying, for each $\bar\mu\in C^\star$,
\begin{equation}\label{eq:empty-intersec}
\mathrm{d}(\bar x_*, \bar x'_*)\leq2\qquad\Longrightarrow\qquad C_{\bar\mu}(\bar x_*)\cap C_{\bar\mu}(\bar x'_*)=\emptyset\;.
\end{equation}
\end{theorem}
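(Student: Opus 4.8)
The plan is to read off the block structure of $C$ from the linear functionals that vanish on it, and then to check that the pieces so obtained are perfect codes lying at distance at least three from one another. Put $d=n-\mathrm{rank}(C)$, which is $\ge r$ by hypothesis, and let $U=\langle C\rangle^{\perp}$, a space of linear functionals of dimension $d$. I would first establish the \emph{weight lemma}: every nonzero $f\in U$, identified with its coefficient vector $(a_1,\dots,a_n)$, has Hamming weight exactly $q^{m-1}$. Here $C$ lies in the coset $V_0=\ker f$, while for $i\ne0$ the coset $V_i=\{\bar y:f(\bar y)=i\}$ has size $q^{n-1}$; since $C$ is perfect of minimum distance $3$, the map $(\bar c,j)\mapsto\bar c+(i/a_j)e_j$, with $\bar c$ ranging over $C$ and $j$ over the indices with $a_j\ne0$, is a bijection onto $V_i$ (injectivity uses $\mathrm{d}(C)\ge3$; surjectivity uses that each word of $V_i$ has its unique nearest codeword, necessarily in $V_0$, at distance exactly one). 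Comparing sizes gives $q^{n-m}\cdot w(f)=q^{n-1}$, so $w(f)=q^{m-1}$.

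Next, fix any $r$-dimensional subspace $U_0\subseteq U$. By the lemma, $U_0$ is an $[n,r]$ linear code all of whose nonzero words have weight $q^{m-1}=q^{r-1}q^{s}$, where $s=m-r\ge1$. By Bonisoli's theorem on constant-weight linear codes --- or directly, by showing that each of the $t$ projective points of $F_q^{r}$ occurs equally often among the coordinates of $U_0$, which a short counting argument with point--hyperplane incidences forces --- $U_0$ coincides, after a suitable monomial transformation, with the length-$n$ code obtained from $q^{s}$ copies of the $r$-dimensional $q$-ary simplex code (the dual of the Hamming code $C^{\star}$ of length $t$) by appending $n_0=n-q^{s}t=(q^{s}-1)/(q-1)$ zero coordinates. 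Transporting this to $C$, there are a monomial transformation $\psi$ and a block decomposition $\bar x=(\bar x_1\mid\dots\mid\bar x_t\mid\bar x_0)$ with $|\bar x_i|=q^{s}$ and $|\bar x_0|=n_0$ such that $\psi(C)$ is annihilated by the $r$ functionals with coefficient matrix $M=[\,v_1\cdots v_1\mid\dots\mid v_t\cdots v_t\mid 0\cdots0\,]$, in which each $v_i$ is repeated over the block $\bar x_i$, the last $n_0$ columns vanish, and $[\,v_1\mid\dots\mid v_t\,]=:H^{\star}$ is a parity-check matrix of $C^{\star}$. Since for such $M$ one has $M\bar x=H^{\star}\bar\sigma(\bar x)$, the condition $\bar x\in\ker M$ is exactly $\bar\sigma(\bar x)\in C^{\star}$; hence every codeword of $\psi(C)$ satisfies $\bar\sigma(\bar x)\in C^{\star}$, and setting $K_{\bar\mu}:=\psi(C)\cap\{\bar x:\bar\sigma(\bar x)=\bar\mu\}$ gives the disjoint union $\psi(C)=\bigcup_{\bar\mu\in C^{\star}}K_{\bar\mu}$.

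It remains to put $K_{\bar\mu}$ in the form (\ref{eq:def-comp}) and to check (\ref{eq:empty-intersec}). For $\bar\mu\in C^{\star}$ and $\bar x_*$ with $\bar\sigma(\bar x_*)=\bar\mu$, set $C_{\bar\mu}(\bar x_*)=\{\bar x_0\in F_q^{n_0}:(\bar x_*\mid\bar x_0)\in\psi(C)\}$; then $K_{\bar\mu}$ has the shape (\ref{eq:def-comp}) by construction, and I claim $C_{\bar\mu}(\bar x_*)$ is a perfect code of length $n_0=(q^{s}-1)/(q-1)$ (this length is $\ge1$ precisely because $r<m$). Uniqueness of the nearest codeword in $C_{\bar\mu}(\bar x_*)$ to a given $\bar x_0$ is inherited from perfectness of $\psi(C)$. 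For the covering property, let $\bar c$ be the nearest codeword of $\psi(C)$ to $(\bar x_*\mid\bar x_0)$; if $\bar c$ differed from it in a coordinate of some block $\bar x_i$, say by $\lambda\ne0$, then $\bar\sigma(\bar c_*)=\bar\mu+\lambda e_i$ would lie in $C^{\star}$, putting the weight-one word $\lambda e_i$ in $C^{\star}$ --- impossible since $C^{\star}$ has minimum distance $3$. Hence $\bar c$ agrees with $(\bar x_*\mid\bar x_0)$ on the $\bar x_*$-part, which exhibits a member of $C_{\bar\mu}(\bar x_*)$ at distance at most one from $\bar x_0$; in particular $C_{\bar\mu}(\bar x_*)\ne\emptyset$. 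Finally, for (\ref{eq:empty-intersec}), read with $\bar x_*\ne\bar x'_*$: if some $\bar x_0$ lay in $C_{\bar\mu}(\bar x_*)\cap C_{\bar\mu}(\bar x'_*)$ with $\mathrm{d}(\bar x_*,\bar x'_*)\le2$, then $(\bar x_*\mid\bar x_0)$ and $(\bar x'_*\mid\bar x_0)$ would be distinct codewords of $\psi(C)$ at distance at most two, contradicting $\mathrm{d}(\psi(C))=3$.

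The step I expect to be the real obstacle is the passage in the second paragraph from the weight lemma to the replicated-simplex normal form: establishing (or correctly invoking) the structure of constant-weight linear codes, and pinning down that the replication number equals $q^{s}$ and that exactly $n_0=(q^{s}-1)/(q-1)$ coordinates survive as zeros, together with the bookkeeping that turns the monomial equivalence of $U_0$ with the replicated simplex code into the required monomial transformation $\psi$ of $C$ itself. Once $\psi$ and $C^{\star}$ are fixed, the remaining points --- the computation $M\bar x=H^{\star}\bar\sigma(\bar x)$, the perfectness of each $C_{\bar\mu}(\bar x_*)$, and the distance-three separation --- reduce to the short arguments indicated above.
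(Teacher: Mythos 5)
Your proposal is correct and reaches the theorem along the same overall skeleton as the paper (normalize an $r$-dimensional subspace of $\langle C\rangle^{\perp}$ by a monomial map, read off $C^{\star}$ from $H^{\star}$, observe $M\bar x=H^{\star}\bar\sigma(\bar x)$, and verify perfectness of the sections $C_{\bar\mu}(\bar x_*)$ and the disjointness condition exactly as you do). The genuine difference is in how the block structure is pinned down. You establish it \emph{a priori}: the weight lemma (every nonzero word of $\langle C\rangle^{\perp}$ has weight $q^{m-1}$, proved by your counting bijection, which is correct) makes $U_0$ a constant-weight linear code, and Bonisoli's theorem then forces the $q^{s}$-fold replicated simplex form with exactly $n_0=(q^{s}-1)/(q-1)$ zero coordinates. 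The paper instead only groups proportional columns of a parity-check matrix of $D\supseteq\langle C\rangle$ into blocks of a priori unknown sizes $n_i$, and then derives the structure \emph{a posteriori} from perfectness of $C$: covering radius $1$ forces $C^{\star}$ to be perfect (hence $t=(q^r-1)/(q-1)$, so every projective point occurs), the pairwise disjointness of the perfect codes $C_{\bar\mu}(\bar x_*)$ and $C_{\bar\mu}(\bar x_*+\bar e_1-\bar e_i)$ inside $F_q^{n_0}$ gives $n_i\le q^{s}$, and a reverse sphere-counting argument gives $n_i\ge q^{s}$. Your route buys a cleaner, more conceptual normal form at the price of importing an external classification (or the point--hyperplane counting argument you sketch, which does go through); the paper's route is self-contained and uses nothing beyond the defining property of a perfect code. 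One small point in your favor: you explicitly verify that every $C_{\bar\mu}(\bar x_*)$ with $\bar\sigma(\bar x_*)=\bar\mu$ is nonempty (via the minimum distance of $C^{\star}$), a point the paper passes over rather quickly.
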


The code $C^\star$ will be called an \emph{outer} code to $\psi(C)$. The subcodes $K_{\bar\mu}$ will be called \emph{$\bar\mu$-components} of $\psi(C)$. As the minimum distance of $C$ is three, the distance between any two distinct $\bar\mu$-components will be at least three.

\begin{proof}
Let $D$ be any subspace of $F_q^n$ containing $<C>$, and of dimension $n-r$.
By using a monomial transformation $\psi$ of space we may achieve that the dual space of $\psi(D)$
is the nullspace of a $r\times n$-matrix
\[
H=\left[\begin{array}{ccc|ccc|c|ccc|ccc}
|&&|&|&&|&&|&&|&|&&|\\
\bar\alpha_{11}&\cdots&\bar\alpha_{1n_1}&\bar\alpha_{21}&\cdots&\bar\alpha_{2n_2}&\cdots&\bar\alpha_{t1}&\cdots&\bar\alpha_{tn_t}&\bar 0&\cdots&\bar0\\
|&&|&|&&|&&|&&|&|&&|
\end{array}\right]
\]
where $\bar\alpha_{ij}=\bar\alpha_i$, for $i=1,2,\dots,t$, the first non-zero coordinate in each vector $\bar\alpha_i$ equals 1, $\bar \alpha_i\neq\bar\alpha_{i'}$, for $i\neq i'$, and where the columns of $H$ are in lexicographic order, according to some given ordering of $F_q$.

To avoid too much notation we assume that $C$ was such that $\psi=\mathrm{id.}$

Let $C^\star$ be the null space of the matrix
\[
H^\star=\left[\begin{array}{cccc}
|&|&&|\\
\bar\alpha_1&\bar\alpha_2&\cdots&\bar\alpha_t\\
|&|&&|
\end{array}\right]
\]
Define, for $\bar \mu\in C^\star$,
\[
K_{\bar \mu}=\{\,(\bar x_1\mid\bar x_2\mid\dots\mid \bar x_t\mid \bar x_0)\in C\,:\,(\sigma_1(\bar x_1),\sigma_2(\bar x_2),\dots,\sigma(\bar x_t))=\bar \mu\,\}\;.
\]
Then,
\[
C=\bigcup_{\bar \mu\in C^\star}\,K_{\bar \mu}.
\]
Further, since any two columns of $H^\star$ are linearly independent,
for any two distinct words $\bar\mu$ and $\bar\mu'$ of $C^\star$
\begin{equation}\label{eq:min-dist-komp}
\mathrm{d}(K_{\bar \mu}, K_{\bar \mu'})\geq 3.
\end{equation}
We will show that $K_{\bar\mu}$ has the properties given in Equation (\ref{eq:def-comp}).

Any word $\bar x=(\bar x_1\mid\bar x_2\mid\dots\mid\bar x_t\mid\bar x_{0})$ must be at distance at most one from a word of $C$, and hence, the word $(\sigma_1(\bar x_1),\sigma_2(\bar x_2),\dots,\sigma_t(\bar x_t))$ is at distance at most one from some word of $C^\star$. It follows that $C^\star$ is a perfect code, and as a consequence, as $C^\star$ is linear, it is a Hamming code with parity-check matrix $H^\star$.
As the number of rows of $H^\star$ is $r$, we then get that the number $t$ of columns of $H^\star$ is equal to
\[
t=\frac{q^{r}-1}{q-1}=1+q+q^2+\dots+q^{r-1}\;.
\]

For any word
$\bar x_*$
of $F_q^{n_1+n_2+\dots+n_t}$ with $\bar\sigma(\bar x_*)=\bar\mu\in C^\star$, we now define the code $C_{\bar\mu}(\bar x_*)$ of length $n_0$ by
\[
C_{\bar\mu}(\bar x_*)=\{\,\bar c\in F_q^{n_0}:
(\bar x_*\mid\bar c)
\in C\,\}\;.
\]
Again, using the fact that $C$ is a perfect code, we may deduce that for any $\bar x_*$ such that the set $C_{\bar\mu}(\bar x_*)$ is non empty, the set $C_{\bar\mu}(\bar x_*)$ must be a perfect code of length $n_0=(q^{s}-1)/(q-1)$, for some integer $s$.

From the fact that the minimum distance of $C$ equals three, we get the property in Equation (\ref{eq:empty-intersec}).

Let $\bar e_i$ denote a word of weight one with the entry 1 in the coordinate position $i$.
It then follows that the two perfect codes
$C_{\bar\mu}(\bar x_*)$ and $C_{\bar\mu}(\bar x_*+\bar e_1-\bar e_i)$, for $i=2,3,\dots,n_1$, must be mutually disjoint. Hence, $n_1$ is at most equal to the number of perfect codes in a partition of $F_q^{n_0}$ into perfect codes, i.e.,
\[
n_1\leq (q-1)n_0+1=q^{s}\;.
\]
Similarly,  $n_i\leq q^{s}$, for  $i=2,3,\dots,t$.

Reversing these arguments, using Equation (\ref{eq:min-dist-komp}) and the fact that $C$ is a perfect code, we find that $n_i$, for each $i=1,2,\dots,t$, is at least equal to the number of words in an $1$-ball of $F_q^{n_0}$.

We conclude that $n_i=q^{s}$, for $i=1,2,\dots,t$, and finally
\[
n= q^{s}(1+q+q^2+\dots+q^{r-1})+1+q+q^2+\dots+q^{s-1}=1+q+q^2+\dots+q^{r+s-1}\;.
\]
Given $r$, we can then find $s$ from the equality
$$
n=1+q+q^2+\dots+q^{m-1}\;.
$$
\qed\end{proof}

\section{Combining construction of perfect codes}

In the previous section, it
was
shown that a perfect
code, depending on its rank, can be divided onto small or large
number of so-called $\bar\mu$-components, which satisfy some equation
with $\bar\sigma$.
The construction described in the following theorem
realizes the idea of combining independent $\bar\mu$-components,
differently constructed or taken from different perfect codes, in one perfect code.

A function $f:\Sigma^n\to\Sigma$,
where $\Sigma$ is some set, is called an \emph{$n$-ary} (or \emph{multary}) \emph{quasigroup} of order $|\Sigma|$ if in the equality $z_0=f(z_1,\ldots,z_n)$ knowledge of any $n$ elements
of $z_0$, $z_1$, \ldots, $z_n$ uniquely specifies the remaining one.
\begin{theorem}\label{th:constr}
Let $m$ and $r$ be integers, $m>r$, $q$ be a prime power, $n=(q^m-1)/(q-1)$ and $t=(q^r-1)/(q-1)$. Assume that $C^*$ is
a perfect code in $F_q^t$ and for every
$\bar\mu\in C^*$ we have
a distance-$3$ code $K_{\bar\mu}\subset F_q^n$ of cardinality $q^{n-m-(t-r)}$
 that satisfies the following generalized parity-check
law:
$$ \bar\sigma(\bar x)=(\sigma_1(x_{1},\ldots,x_{l}),\ldots,\sigma_t(x_{lt-l+1},\ldots,x_{lt}))=\bar\mu $$
for every $\bar x=(x_1,\ldots,x_n)\in K_{\bar\mu}$, where
$l=q^{m-r}$ and
$\bar\sigma=(\sigma_1,\ldots,\sigma_t)$ is a collections of $l$-ary
quasigroups of order $q$. Then the union
$$C=\bigcup_{\bar\mu\in C^*} K_{\bar\mu}$$ is a perfect code in $F_q^n$.
\end{theorem}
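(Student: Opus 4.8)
The plan is to prove that $C$ has cardinality $q^{n-m}$ and minimum distance at least $3$; together with a sphere-packing count this already forces $C$ to be perfect. Indeed, once $|C|=q^{n-m}$ and $\mathrm{d}(\bar x,\bar y)\ge3$ for all distinct $\bar x,\bar y\in C$, the Hamming balls of radius $1$ centred at the codewords are pairwise disjoint, each has $1+n(q-1)=q^m$ elements, so their union has $q^{n-m}\cdot q^m=q^n=|F_q^n|$ elements and hence equals $F_q^n$; thus these balls partition $F_q^n$, which is exactly the statement that $C$ is a perfect $1$-error-correcting code. So the whole task reduces to the cardinality and the minimum distance of $C$.

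For the cardinality I would first note that the components are pairwise disjoint: if $\bar x\in K_{\bar\mu}\cap K_{\bar\mu'}$ then $\bar\mu=\bar\sigma(\bar x)=\bar\mu'$ by the generalized parity-check law. Since $C^*$ is a perfect code in $F_q^t$ with $t=(q^r-1)/(q-1)$, it has exactly $q^{t-r}$ codewords, whence
$$
|C|=\sum_{\bar\mu\in C^*}|K_{\bar\mu}|=q^{t-r}\cdot q^{n-m-(t-r)}=q^{n-m},
$$
the cardinality required of a perfect code in $F_q^n$.

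For the minimum distance, take distinct $\bar x,\bar y\in C$ with $\bar x\in K_{\bar\mu}$, $\bar y\in K_{\bar\mu'}$. If $\bar\mu=\bar\mu'$, then $\bar x$ and $\bar y$ both lie in the distance-$3$ code $K_{\bar\mu}$, so $\mathrm{d}(\bar x,\bar y)\ge3$. If $\bar\mu\ne\bar\mu'$, write each of $\bar x,\bar y$ in block form $(\bar x_1\mid\dots\mid\bar x_t\mid\bar x_0)$ with $\bar x_i\in F_q^{l}$ for $i=1,\dots,t$, where $l=q^{m-r}$, so the last $n_0=n-lt=(q^{m-r}-1)/(q-1)$ coordinates do not enter $\bar\sigma$. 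Since $\sigma_i$ depends only on its block, $\bar x_i=\bar y_i$ forces $\sigma_i(\bar x_i)=\sigma_i(\bar y_i)$, and as the blocks occupy pairwise disjoint coordinate sets,
$$
\mathrm{d}\bigl(\bar\sigma(\bar x),\bar\sigma(\bar y)\bigr)\le\#\{\,i:\bar x_i\ne\bar y_i\,\}\le\mathrm{d}(\bar x,\bar y).
$$
But $\bar\sigma(\bar x)=\bar\mu$ and $\bar\sigma(\bar y)=\bar\mu'$ are distinct codewords of the perfect code $C^*$, whose minimum distance is $3$; hence $3\le\mathrm{d}(\bar\mu,\bar\mu')\le\mathrm{d}(\bar x,\bar y)$, as desired.

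I do not expect a genuine obstacle: the one real idea is that the generalized parity-check law turns $\bar\sigma$ into a distance-non-increasing map of $F_q^n$ onto $F_q^t$, so that the minimum distance of the outer code $C^*$ automatically spreads apart codewords belonging to different components, while sphere-packing disposes of the covering property. Note that the minimum-distance argument uses of the $\sigma_i$ only that each depends on its own block; the $l$-ary quasigroup hypothesis is what guarantees, in addition, that $\bar\sigma^{-1}(\bar\mu)$ is non-empty (indeed of size $q^{n-t}$) for every $\bar\mu\in C^*$, so that the hypothesis $|K_{\bar\mu}|=q^{n-m-(t-r)}$ can actually be met. The only points needing care are the routine identities $n=lt+n_0$ and $q^m=1+n(q-1)$ which make the two cardinality counts line up.
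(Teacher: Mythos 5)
Your proof is correct and follows essentially the same route as the paper's own (very terse) argument: verify the cardinality $|C|=q^{t-r}\cdot q^{n-m-(t-r)}=q^{n-m}$, get minimum distance $3$ from the code distance of $K_{\bar\mu}$ within a component and from the code distance of $C^*$ across components, and conclude by sphere packing. Your explicit observation that $\bar\sigma$ is distance-non-increasing because each $\sigma_i$ depends only on its own block is exactly the detail the paper leaves implicit.
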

\begin{proof}
It is easy to check that $C$ has the cardinality of a perfect code.
The distance at least $3$ between different words $\bar x$, $\bar y $ from $C$ follows from the code distances of $K_{\bar\mu}$
(if $\bar x$, $\bar y $ belong to the same $K_{\bar\mu}$) and $C^*$
(if $\bar x$, $\bar y $ belong to different $K_{\bar\mu'}$, $K_{\bar\mu''}$, $\bar\mu',\bar\mu''\in C^*$).
\qed\end{proof}

The \emph{$\bar\mu$-components} $K_{\bar\mu}$ can be constructed independently or taken from
different perfect codes. In the important case when all $\sigma_i$ are linear quasigroups (e.g., $\sigma_i(y_1,\ldots,y_l)=y_1+\dots+y_l$)
the components can be taken from any perfect code of rank at most $n-r$, as follows from the previous section (it should be noted that if $\bar\sigma$ is linear, then a $\bar\mu$-component can be obtained from any $\bar\mu'$-component by adding a vector $\bar z$ such that $\bar\sigma(\bar z)=\bar\mu-\bar\mu'$).

In general, the existence of $\bar\mu$-components that satisfy
the generalized parity-check
law
 for arbitrary $\bar \sigma$
is questionable.
But for some class of $\bar \sigma$ such components exist, as we will see from the following two
subsections.

\begin{note}
It is worth mentioning that $\bar\mu$-components can exist for arbitrary length $t$ of $\bar\mu$
(for example, in the next two subsections there are no restrictions on $t$),
if we do not require the possibility to combine them into a perfect code.
This is especially important for the study of perfect codes of small ranks (close to the rank of a linear perfect code): once we realize that the code is the union of $\bar\mu$-components of some special form,
we may forget about the code length and consider $\bar\mu$-components for arbitrary length of $\bar\mu$,
which allows to use recursive approaches.
\end{note}

\subsection{Mollard-Phelps construction}
Here we describe the way to construct
$\bar\mu$-components
derived from
the product construction discovered independently in \cite{Mollard}
and \cite{Phelps:q}.
In terms of $\bar\mu$-components,
the construction in \cite{Phelps:q} is more general; it allows substitution of arbitrary multary quasigroups, and we will use this possibility in Section~\ref{s:No}.

\begin{lemma}\label{l:Ph}
Let $\bar\mu\in F_q^{t}$ and
let $C^\#$ be a perfect code in $F_q^{k}$.
Let $v$ and $h$ be
$(q-1)$-ary quasigroups of order $q$ such that
the code $\{(\bar y\mid v(\bar y)\mid h(\bar y))\,:\, \bar y\in F_q^{q-1}\}$ is perfect.
Let $V_1$, \ldots, $V_t$ and $H_1$, \ldots, $H_k$ be respectively $(k+1)$-ary and $(t+1)$-ary quasigroups of order $q$.
Then the set
\begin{eqnarray*}
 K_{\bar\mu} &=& \Big\{ \big(
\underline{\bar x_{11}\mid ...\mid \bar x_{1k}\mid  y_1} \mid
\underline{\bar x_{21}\mid ...\mid \bar x_{2k}\mid  y_2} \mid
\ldots\mid
\underline{\bar x_{t1}\mid ...\mid \bar x_{tk}\mid  y_t}\mid
\underline{z_1\mid z_2\mid  ...\mid  z_k}\big)
: \\ &&\ \
 \bar x_{ij}\in F_q^{q-1},
\\ &&\ \  \left(V_1(v(\bar x_{11}),...,v(\bar x_{1k}),y_1),\ldots,V_t(v(\bar x_{t1}),...,v(\bar x_{tk}),y_t)\right)=\bar\mu, \\ &&\ \
\left(H_1(h(\bar x_{11}),...,h(\bar x_{t1}),z_1),\ldots,H_k(h(\bar x_{1k}),...,h(\bar x_{tk}),z_k)\right)\in C^{\#} \Big\}
\end{eqnarray*}
is a $\bar\mu$-component that satisfies the generalized parity-check
law
with $$\sigma_i(\cdot,\ldots,\cdot,\cdot)=V_i(v(\cdot),...,v(\cdot),\cdot).$$
(The elements of $F_q^{(q-1)kt+k+t}$ in this construction may be thought of as
three-dimensional arrays where the elements of $\bar x_{ij}$ are z-lined, every underlined block is y-lined,
and the tuple of blocks is x-lined. Naturally, the multary quasigroups $V_i$ may be named ``vertical'' and $H_i$, ``horizontal''.)
\end{lemma}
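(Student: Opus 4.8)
The plan is to verify three things: that $K_{\bar\mu}$ satisfies the generalized parity-check law with $\bar\sigma=(\sigma_1,\dots,\sigma_t)$ a collection of multary quasigroups, that it has the cardinality of a $\bar\mu$-component, and --- the only substantial point --- that its minimum distance is at least $3$.

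The first two items are bookkeeping. That $\bar\sigma(\bar x)=\bar\mu$ holds on $K_{\bar\mu}$ is precisely the first displayed constraint in the definition of $K_{\bar\mu}$, once $\sigma_i$ is read as acting on the $i$-th underlined block $(\bar x_{i1}\mid\dots\mid\bar x_{ik}\mid y_i)$ of $(q-1)k+1=q^{m-r}$ coordinates. Each $\sigma_i(\cdot,\dots,\cdot,\cdot)=V_i(v(\cdot),\dots,v(\cdot),\cdot)$ is a multary quasigroup because recovering a missing argument from the remaining ones together with the output value amounts to inverting the quasigroup $V_i$ in one of its coordinates and, if the missing argument lies inside some $\bar x_{ij}$, then also inverting the quasigroup $v$ --- a composition of bijections. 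For the cardinality: the $tk$ blocks $\bar x_{ij}$ are free ($q^{(q-1)tk}$ choices); each $y_i$ is then forced uniquely by $V_i$ and $\mu_i$; and, since for fixed $\bar x_{ij}$ the map $(z_1,\dots,z_k)\mapsto\bigl(H_1(h(\bar x_{11}),\dots,h(\bar x_{t1}),z_1),\dots,H_k(h(\bar x_{1k}),\dots,h(\bar x_{tk}),z_k)\bigr)$ is a coordinatewise bijection of $F_q^k$, the admissible $\bar z$ are exactly the preimage of $C^\#$, contributing a factor $|C^\#|$. So $|K_{\bar\mu}|=q^{(q-1)tk}|C^\#|$, the cardinality a $\bar\mu$-component of this length must have.

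The distance bound is the heart of the matter. Take distinct $\bar x,\bar x'\in K_{\bar\mu}$ and write $\mathrm{d}(\bar x,\bar x')=d_0+d_V+d_H$, where $d_0=\sum_{i,j}\mathrm{d}(\bar x_{ij},\bar x'_{ij})$ is the contribution of the blocks, $d_V=|\{i:y_i\neq y'_i\}|$, and $d_H=|\{j:z_j\neq z'_j\}|$. Three facts drive the estimate. (i) On any cell with $\bar x_{ij}\neq\bar x'_{ij}$, minimum distance $3$ of the perfect code $\{(\bar y\mid v(\bar y)\mid h(\bar y))\}$ applied to the distinct codewords $(\bar x_{ij}\mid v(\bar x_{ij})\mid h(\bar x_{ij}))$ and $(\bar x'_{ij}\mid v(\bar x'_{ij})\mid h(\bar x'_{ij}))$ yields $\mathrm{d}(\bar x_{ij},\bar x'_{ij})+[v(\bar x_{ij})\neq v(\bar x'_{ij})]+[h(\bar x_{ij})\neq h(\bar x'_{ij})]\geq 3$. (ii) For each row $i$, since $V_i$ is a quasigroup whose value on $(v(\bar x_{i1}),\dots,v(\bar x_{ik}),y_i)$ and on the primed tuple both equal $\mu_i$, these two $(k+1)$-tuples cannot differ in exactly one coordinate. (iii) For each column $j$, if the $(t+1)$-tuples $(h(\bar x_{1j}),\dots,h(\bar x_{tj}),z_j)$ and its primed counterpart differ in exactly one coordinate, then the quasigroup $H_j$ forces $c_j\neq c'_j$, where $\bar c,\bar c'\in C^\#$ are the words attached to $\bar x,\bar x'$; and once $\bar c\neq\bar c'$, minimum distance $3$ of $C^\#$ gives at least three indices $j$ with $c_j\neq c'_j$. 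Now case on $d_0$. If $d_0=0$, fact (ii) forces $d_V=0$, so $\bar x\neq\bar x'$ forces some $z_j\neq z'_j$, hence (fact (iii)) $\bar c\neq\bar c'$ and then $d_H=\mathrm{d}(\bar c,\bar c')\geq 3$; if $d_0\geq3$ there is nothing to prove. The two remaining cases are $d_0=1$ (a single block differs, in one coordinate) and $d_0=2$ (either one block differs in two coordinates, or two blocks each differ in one). In each, fact (i) forces the offending cell(s) to differ also in the $v$- and/or $h$-layer; fact (ii) then contributes a discrepancy in some $y_i$, and fact (iii) contributes --- using the distance of $C^\#$ to turn one forced column discrepancy into three --- at least one (and, when $d_0=1$, at least two) discrepancies among the $z_j$; summing, $d_0+d_V+d_H\geq3$ in every branch.

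The main obstacle I anticipate is precisely this last, finite case analysis. The delicate point is that a multary quasigroup forces distinct outputs only when its inputs differ in \emph{exactly one} coordinate, so before invoking $H_j$ one must confirm that the discrepancy induced in the column tuple $(h(\bar x_{1j}),\dots,h(\bar x_{tj}),z_j)$ is confined to a single coordinate; only after that does the minimum distance of $C^\#$ come into play to close the remaining gap in the weight count. The rest is routine.
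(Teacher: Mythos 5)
Your proposal is correct in substance and, in fact, supplies exactly what the paper omits: the paper's entire proof of this lemma is the remark that the code-distance argument is ``similar to that in'' Phelps's product construction and that the remaining properties of a $\bar\mu$-component are straightforward. Your verification of the parity-check law (a composition of quasigroups is again a multary quasigroup) and of the cardinality ($q^{(q-1)tk}\,|C^{\#}|=q^{n-m-(t-r)}$ once $k=(q^{m-r}-1)/(q-1)$) is what the authors call straightforward, and your distance argument --- splitting $\mathrm{d}(\bar x,\bar x')=d_0+d_V+d_H$ and playing the minimum distance of the small code $\{(\bar y\mid v(\bar y)\mid h(\bar y))\}$ against the quasigroup properties of the $V_i$ and $H_j$ and the distance of $C^{\#}$ --- is precisely the Mollard/Phelps argument being referenced. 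So you are not on a different route; you have written out the route the paper waves at.

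One caution about your closing summary of the cases $d_0\in\{1,2\}$: the blanket claims that fact (ii) ``contributes a discrepancy in some $y_i$'' and that fact (iii) contributes at least two $z$-discrepancies when $d_0=1$ are not true branch by branch. If $d_0=2$ because two blocks $\bar x_{ij_1}\neq\bar x'_{ij_1}$ and $\bar x_{ij_2}\neq\bar x'_{ij_2}$ lie in the \emph{same row} $i$, the tuple fed to $V_i$ differs in two of its first $k$ coordinates and fact (ii) gives nothing; there you must instead go to the columns $j_1,j_2$, where the input to $H_{j_1}$ differs in exactly one of its first $t$ coordinates, so either $z_{j_1}\neq z'_{j_1}$ (giving $d_H\geq 1$) or $c_{j_1}\neq c'_{j_1}$, whence $\mathrm{d}(\bar c,\bar c')\geq 3$ forces a differing column outside $\{j_1,j_2\}$, which (the $h$-layer being unchanged there) must come from some $z_j\neq z'_j$. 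Dually, when $d_0=1$ you may well end up with $d_H=1$ only (namely $z_j\neq z'_j$ with $\bar c=\bar c'$), and it is $d_V\geq 1$ from fact (ii) that closes the count. The inequality $d_0+d_V+d_H\geq 3$ does hold in every branch using exactly the three facts you state, but which of (ii) and (iii) fires varies; I would write the $d_0=2$ subcases (same row, same column, neither) out explicitly rather than summarizing them in one sentence.
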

The proof of the code distance is similar to that in \cite{Phelps:q}, and the other properties of a $\bar\mu$-component are straightforward.
The existence of admissible $(q-1)$-ary quasigroups $v$ and $h$ is the only restriction on the $q$ (this concerns the next subsection as well).
If $F_q$ is a finite field, there are linear examples:
$v(y_1,\ldots,y_{q-1})=y_1+\ldots+y_{q-1}$, $v(y_1,\ldots,y_{q-1})=\alpha_1 y_1+\ldots+\alpha_{q-1} y_{q-1}$
where $\alpha_1$, \ldots, $\alpha_{q-1}$ are all the non-zero elements of $F_q$.
If $q$ is not a prime power, the existence of a $q$-ary perfect code
of length $q+1$ is an open problem
(with the only exception $q=6$, when the nonexistence follows from the nonexistence of two orthogonal $6\times 6$ Latin squares \cite[Th.\,6]{GolombPosner64}).

\subsection{Generalized Phelps construction}
Here we describe another way to construct
$\bar\mu$-components,
which generalizes the construction of binary perfect codes
from \cite{Phelps84}.

\begin{lemma}\label{l:Ph2}
Let $\bar\mu\in F_q^{t}$.
Let for every $i$ from $1$ to $t+1$ the codes $C_{i,j}$, $j=0,1,\ldots,qk-k$ form a partition
of $F_q^{k}$ into perfect codes and
$\gamma_i:F_q^{k}\to \{0,1,\ldots,qk{-}k\}$ be the corresponding partition function:
$$ \gamma_i(\bar y)=j\ \Longleftrightarrow\ \bar y\in C_{i,j}. $$
Let $v$ and $h$ be
$(q-1)$-ary quasigroups of order $q$ such that
the code $\{(\bar y\mid v(\bar y)\mid h(\bar y))\,:\, \bar y\in F_q^{q-1}\}$ is perfect.
Let $V_1$, \ldots, $V_t$ be $(k+1)$-ary quasigroups of order $q$ and
$Q$ be a $t$-ary quasigroup of order $qk-k+1$.

\begin{eqnarray*}
 K_{\bar\mu} &=& \Big\{ \big(
\underline{\bar x_{11}\mid ...\mid \bar x_{1k}\mid  y_1} \mid
\underline{\bar x_{21}\mid ...\mid \bar x_{2k}\mid  y_2} \mid
\ldots\mid
\underline{\bar x_{t1}\mid ...\mid \bar x_{tk}\mid  y_t}\mid
\underline{z_1\mid z_2\mid  ...\mid  z_k}\big)
: \\ &&\ \
 \bar x_{ij}\in F_q^{q-1},
\\ &&\ \  \left(V_1(v(\bar x_{11}),...,v(\bar x_{1k}),y_1),\ldots,V_t(v(\bar x_{t1}),...,v(\bar x_{tk}),y_t)\right)=\bar\mu, \\ &&\ \ Q(\gamma_1(h(\bar x_{11}),...,h(\bar x_{1k})),\ldots,\gamma_t(h(\bar x_{t1}),...,h(\bar x_{tk})))=\gamma_{t+1}(z_1,...,z_k) \Big\}
\end{eqnarray*}
is a $\bar\mu$-component that satisfies the generalized parity-check
law
with $$\sigma_i(\cdot,\ldots,\cdot,\cdot)=V_i(v(\cdot),...,v(\cdot),\cdot).$$
\end{lemma}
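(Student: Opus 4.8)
The plan is to check the three properties that turn $K_{\bar\mu}$ into a $\bar\mu$-component in the sense of Theorem~\ref{th:constr}: the generalized parity-check law with the stated $\sigma_i$, the correct cardinality, and minimum distance $3$. The first is almost immediate: reading off the $i$-th underlined block $(\bar x_{i1}\mid\ldots\mid\bar x_{ik}\mid y_i)$ of a word of $K_{\bar\mu}$ and applying $\sigma_i=V_i(v(\cdot),\ldots,v(\cdot),\cdot)$ yields precisely the $i$-th entry of the left-hand side of the vertical constraint, so $\bar\sigma(\bar x)=\bar\mu$ holds on $K_{\bar\mu}$ by construction; I would then note that $\sigma_i$ is a $(k(q-1)+1)$-ary quasigroup of order $q$ by the standard composition argument (fixing all but one argument, $v$ is a quasigroup in each of its entries and $V_i$ a quasigroup in each of its $k+1$ entries, and recovering a hidden input from the output is done by first solving $V_i$ for the relevant inner value and then solving $v$). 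For the cardinality: the $tk$ vectors $\bar x_{ij}\in F_q^{q-1}$ are unconstrained, the vertical constraint then determines $y_1,\ldots,y_t$ uniquely, and the horizontal constraint fixes the value $\gamma_{t+1}(z_1,\ldots,z_k)$, hence confines $(z_1,\ldots,z_k)$ to a single one of the perfect codes $C_{t+1,j}$; multiplying, $|K_{\bar\mu}|=q^{(q-1)tk}\cdot q^{k}/(qk-k+1)$, which agrees with $q^{n-m-(t-r)}$ in the situation of Theorem~\ref{th:constr}. Both of these I regard as routine.

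The real content is the distance bound, and here I would first isolate three facts, writing $\bar a_i(\bar x)=(v(\bar x_{i1}),\ldots,v(\bar x_{ik}))$ and $\bar b_i(\bar x)=(h(\bar x_{i1}),\ldots,h(\bar x_{ik}))$. Fact (a): each $\gamma_i$ separates any two distinct words of $F_q^{k}$ at Hamming distance $1$ or $2$ (their color classes being perfect codes), and since $Q$ is a quasigroup, $\gamma_{t+1}$ of the $z$-block is determined by $\gamma_1(\bar b_1),\ldots,\gamma_t(\bar b_t)$ and changes as soon as exactly one of these changes. Fact (b): since $V_i$ is a quasigroup, if the $(k+1)$-tuples $(\bar a_i(\bar x),y_i)$ and $(\bar a_i(\bar x'),y'_i)$ of two words of $K_{\bar\mu}$ differ in exactly one entry, the vertical constraint fails for one of them. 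Fact (c), the only place the perfect code $\{(\bar y\mid v(\bar y)\mid h(\bar y))\}$ enters: there is no pair of distinct words $\bar y,\bar y'\in F_q^{q-1}$ with $\mathrm{d}(\bar y,\bar y')\le 2$, $v(\bar y)=v(\bar y')$ and $h(\bar y)=h(\bar y')$; and if $\mathrm{d}(\bar y,\bar y')=1$ then both $v$ and $h$ separate $\bar y$ and $\bar y'$.

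Then I would take $\bar x\neq\bar x'$ in $K_{\bar\mu}$ with $\mathrm{d}(\bar x,\bar x')\le 2$ and split into cases by how the at most two differing coordinates distribute among the blocks $\bar x_{ij}$, the $y_i$, and the $z_j$; since every block has width $q-1\ge 1$, at most two blocks are affected, and a block differing in two positions is the only difference. If only $y$'s and $z$'s move: a changed $y_i$ makes the $V_i$-arguments differ in one entry against an unchanged $\bar a_i$ (contradiction by (b)); if only $z$'s move, all $\bar b_i$ are fixed, so $\gamma_{t+1}$ of the $z$-block cannot change, contradicting (a). If one block $\bar x_{i_0j_0}$ differs by distance $1$: by (c) both $\bar a_{i_0}$ and $\bar b_{i_0}$ change exactly in position $j_0$; the distance budget leaves at most one further changed coordinate, so either $y_{i_0}$ is unchanged and (b) applies to row $i_0$, or $y_{i_0}$ changes, the $z$-block is then unchanged, and (a) applies to row $i_0$ (the unique row whose $\bar b_i$ moves). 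If one block differs by distance $2$: forcing $v(\bar x_{i_0j_0})=v(\bar x'_{i_0j_0})$ (else (b) on row $i_0$) and $h(\bar x_{i_0j_0})=h(\bar x'_{i_0j_0})$ (else (a) on row $i_0$, the $z$-block being unchanged) contradicts (c). If two blocks differ, each by distance $1$ and nothing else: in different rows, each of those rows has a one-entry change of its $V$-arguments (contradiction by (b)); in the same row $i_0$, that row's $\bar b_{i_0}$ moves in two positions while all other $\bar b_i$ stay fixed, so (a) applies. These exhaust the patterns, so $\mathrm{d}(\bar x,\bar x')\ge 3$.

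I expect the sole obstacle to be the bookkeeping of this last step: arranging the case split so that in every admissible pattern of at most two differing coordinates exactly one of the two constraints is visibly violated, and checking that fact (c) is precisely what kills the one new pattern — a single block differing in two positions — that does not occur in the binary construction of \cite{Phelps84}, where the blocks are single bits. No individual verification is difficult; the only danger is overlooking a sub-case.
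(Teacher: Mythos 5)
Your proposal is correct, and it follows the direct-verification route the paper intends (the paper merely asserts that the verifications are trivial and omits them): parity-check law by construction, cardinality by counting free choices against the quasigroup and partition constraints, and minimum distance by an exhaustive case analysis of how at most two differing coordinates can fall among the blocks $\bar x_{ij}$, the $y_i$ and the $z_j$, using exactly the three facts you isolate. Your case split is exhaustive and each case is correctly closed, including the genuinely $q$-ary sub-case of a single block differing in two positions, which is where the perfectness of $\{(\bar y\mid v(\bar y)\mid h(\bar y))\}$ is needed.
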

The proof consists of trivial verifications.

\section{On the number of perfect codes}\label{s:No}

In this section we discuss some observations,
which result in the best known lower bound on the
number of $q$-ary perfect codes, $q\geq 3$.
The basic facts are already contained in other known results:
lower bounds on the number of multary quasigroups of order $q$, the construction
\cite{Phelps:q} of perfect codes from multary quasigroups of order
$q$, and the possibility to choose the quasigroup independently for
every vector of the outer code (this possibility was not explicitly
mentioned in \cite{Phelps:q}, but used in the previous paper
\cite{Phelps84}).

A general lower bound, in terms of the number of multary quasigroups,
is given by Lemma~\ref{l:qua-code}. In combination with Lemma~\ref{l:Nqua}, it gives
explicit numbers.
\begin{lemma}\label{l:qua-code}
The number of $q$-ary perfect codes of length $n$ is not less than
$$Q\left(\frac{n-1}q,q\right)^{R_{\frac{n-1}q}}$$ where $Q(m,q)$ is the number of
$m$-ary quasigroups of order $q$ and
where $R_{n'}=q^{n'}/(n'q-q+1)$ is the
cardinality of a perfect code of length $n'$.
\end{lemma}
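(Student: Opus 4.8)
The plan is to exhibit an injective map from a suitable set of tuples of multary quasigroups into the set of $q$-ary perfect codes of length $n$, and then count the domain. First I would set $n'=(n-1)/q$, so that $n=qn'+1=(q^{m}-1)/(q-1)$ forces $n'=(q^{m-1}-1)/(q-1)$; in particular a $q$-ary perfect code of length $n'$ exists and has cardinality $R_{n'}=q^{n'}/(n'q-q+1)$. I would then invoke Theorem~\ref{th:constr} (equivalently, the Mollard--Phelps construction of Lemma~\ref{l:Ph}) with outer code $C^\star$ a fixed Hamming code in $F_q^{t}$ where $r=1$, so $t=1$ and $C^\star=F_q$; with this choice the ``block length'' inside each $\bar\mu$-component is $l=q^{m-1}$, and $n = l\cdot t + n' $ with $t=1$ reduces correctly — more transparently, one uses the $t$-fold product form where every one of the $R_{n'}$ positions of the inner perfect code of length $n'$ gets its own $(something)$-ary quasigroup. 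The key point, already emphasized in the paper after Theorem~\ref{th:constr} and in Section~\ref{s:No}, is that in the Phelps-type product construction one may choose the substituted multary quasigroup \emph{independently at each coordinate of the outer (inner) perfect code}, and distinct choices yield distinct perfect codes.

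Concretely, the steps are: (1) fix a perfect code of length $n'$ and realize, via the construction of \cite{Phelps:q} in the form of Lemma~\ref{l:Ph}, a perfect code of length $n=qn'+1$ built from $R_{n'}$ independently chosen $\left(\tfrac{n-1}{q}\right)$-ary, i.e. $n'$-ary, quasigroups of order $q$ — one ``horizontal'' quasigroup $H_j$ per coordinate $j=1,\dots,R_{n'}$; (2) check that two families of quasigroups $(H_j)_j \neq (H'_j)_j$ produce two different perfect codes, which follows because the value of $H_j$ on a given argument tuple is recoverable from the codeword set by looking at the coordinate block affected by that argument, so the construction is injective in the tuple $(H_1,\dots,H_{R_{n'}})$; (3) count: the number of choices is $Q(n',q)^{R_{n'}} = Q\!\left(\tfrac{n-1}{q},q\right)^{R_{\frac{n-1}{q}}}$, and since every such choice gives a genuine perfect code of length $n$, the number of perfect codes is at least this quantity. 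Filling in the exact bookkeeping of which block of $n'$ coordinates each $H_j$ governs, and verifying the independence/injectivity claim against the precise definition in Lemma~\ref{l:Ph}, is the only real work.

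The main obstacle I expect is step~(2): one must make precise the sense in which the substituted quasigroups are ``independent'' and argue that changing one of them in even a single argument produces a codeword that is not in the other code. This requires looking carefully at the Mollard--Phelps coordinates: the horizontal quasigroups $H_1,\dots,H_k$ (here $k=R_{n'}$, with the inner perfect code $C^\#$ of length $k$... in the degenerate counting instance one takes $t$ large and $C^\#$ trivial, or symmetrically swaps the roles of $t$ and $k$) act on disjoint groups of coordinates, so a word of $K_{\bar\mu}$ reveals, for each horizontal position, the value of $H_j$ at the tuple $(h(\bar x_{1j}),\dots,h(\bar x_{tj}),z_j)$; ranging over all words recovers $H_j$ completely and independently of $H_{j'}$ for $j'\neq j$. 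Once this is spelled out the injectivity, and hence the lower bound, is immediate. Everything else — the arithmetic identities relating $n$, $n'$, $m$, $t$, $k$, and the cardinality formula $R_{n'}$ — is routine substitution using $n=(q^m-1)/(q-1)$.
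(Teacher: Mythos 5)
Your proposal has the right ingredients (a Phelps-type product construction, independent quasigroup choices, an injectivity argument), but the architecture is wrong, and the error sits exactly where the exponent $R_{\frac{n-1}{q}}$ has to come from. In the paper's proof the outer code $C^*$ of Theorem~\ref{th:constr} is taken to be a perfect code of length $t=\frac{n-1}{q}$; it therefore has $R_{\frac{n-1}{q}}$ \emph{codewords} $\bar\mu$, and for each such codeword one independently chooses a $\bar\mu$-component, each realizable in $Q(t,q)$ distinct ways by varying the $t$-ary quasigroup $Q$ in Lemma~\ref{l:Ph2} with $k=1$ (or, with the caveat noted in Section~\ref{s:No}, the single quasigroup $H_1$ of Lemma~\ref{l:Ph}). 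Since the components are recoverable from their union as the sets $\{\bar x\in C:\bar\sigma(\bar x)=\bar\mu\}$, distinct tuples of choices give distinct perfect codes, yielding $Q(t,q)^{|C^*|}$. You instead set $r=1$, hence $t=1$, so your outer code is a single point (and note that $C^\star=F_q$ is not a perfect code of length $1$); this collapses the construction to a single component and destroys the source of the $R_{\frac{n-1}{q}}$ independent choices.

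Second, you try to recover the exponent by taking ``one horizontal quasigroup $H_j$ per coordinate $j=1,\dots,R_{n'}$'' of a perfect code of length $n'$. This conflates the cardinality $R_{n'}$ of that code with its number of coordinates: a code of length $n'$ has $n'$ coordinates, and Lemma~\ref{l:Ph} supplies exactly $k$ horizontal quasigroups, one per coordinate of $C^{\#}$, never one per codeword. If you literally set $k=R_{n'}$ with $(t+1)$-ary, i.e.\ $n'$-ary, quasigroups, the resulting word length $(q-1)kt+k+t$ is nowhere near $n=qn'+1$, so your construction does not even produce codes of the correct length. The repair is to move the independence one level up: the $R_{\frac{n-1}{q}}$ independent choices are indexed by the codewords of the outer code $C^*$, not by coordinates of any inner code, and your injectivity argument from step (2) should then be applied per component (distinct quasigroups give distinct $\bar\mu$-components, which remain distinguishable after taking the union) rather than per coordinate block.
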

\begin{proof}
Constructing a perfect code like in Theorem~\ref{th:constr} with
$t=\frac{n-1}q$, we combine $R_{\frac{n-1}q}$ different
$\bar\mu$-components.

 Constructing every such a component as in Lemma~\ref{l:Ph2}, $k=1$, $t=\frac{n-1}q$,
we are free to choose the $t$-ary quasigroup $Q$ of order $q$ in $Q(t,q)$ ways.
Clearly, different $t$-ary quasigroups give different components.
(Equivalently, we can use Lemma~\ref{l:Ph} and choose the $(t+1)$-ary quasigroup $H_1$,
but should note that the value of $H_1$ in the construction is always fixed when $k=1$, because $C^\#$ consists of only one vertex;
so we again have $Q(t,q)$ different choices, not $Q(t+1,q)$).
\qed\end{proof}

\begin{lemma}\label{l:Nqua}The number $Q(m,q)$ of $m$-ary quasigroups of order $q$
satisfies:
\begin{enumerate}
  \item[\rm(a)] {\rm\cite{LaywineMullen}} $Q(m,3)=3\cdot 2^m;$
  \item[\rm(b)] {\rm\cite{PotKro:asymp}} $Q(m,4) = 3^{m+1}\cdot 2^{2^m+1}(1+o(1));$
  \item[\rm(c)] {\rm\cite{KPS:ir}} $Q(m,5)\geq 2^{3^{n/3-0.072}};$
  \item[\rm(d)] {\rm\cite{PotKro:numberQua}} $Q(m,q)\geq
  2^{((q^2-4q+3)/4)^{n/2}}$ for odd $q$ {\rm(}the previous bound {\rm\cite{KPS:ir}} was
$Q(m,q)\geq 2^{\lfloor q/3\rfloor^{n}}${\rm);}
  \item[\rm(e)] {\rm\cite{KPS:ir}} $Q(m,q_1 q_2)\geq Q(m,q_1)\cdot Q(m,q_2)^{q_1^m}$.
\end{enumerate}
\end{lemma}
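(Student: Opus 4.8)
Each of the five items is a self-contained statement about the counting function $Q(\cdot,\cdot)$, and the plan is to handle them separately. Items (b), (c), (d) are the deep estimates of the cited papers, and for those I would only recall the mechanism and defer to the originals, since reproving them is out of scope here. Items (a) and (e) admit short self-contained arguments, which I would include in full.

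For (e), the plan is a fibered product. Write the alphabet of size $q_1q_2$ as $\Sigma_1\times\Sigma_2$ with $|\Sigma_i|=q_i$. Given an $m$-ary quasigroup $g$ of order $q_1$ on $\Sigma_1$ and, for every $\bar u=(u_1,\ldots,u_m)\in\Sigma_1^m$, an $m$-ary quasigroup $h_{\bar u}$ of order $q_2$ on $\Sigma_2$, put
$$f\big((u_1,v_1),\ldots,(u_m,v_m)\big)=\big(g(u_1,\ldots,u_m),\ h_{(u_1,\ldots,u_m)}(v_1,\ldots,v_m)\big).$$
First I would verify that $f$ is an $m$-ary quasigroup of order $q_1q_2$: if all inputs are known the output is determined; if one input and the output are sought, the first coordinate (governed by $g$) recovers the missing first component of that input, which fixes the index $\bar u$, whereupon $h_{\bar u}$ recovers the missing second component. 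Then I would check that the assignment $(g,(h_{\bar u})_{\bar u\in\Sigma_1^m})\mapsto f$ is injective: $g$ is read off from the first coordinate of $f$, and each $h_{\bar u}$ from the second coordinate of $f$ evaluated at inputs whose first components form $\bar u$. Counting the $Q(m,q_1)$ choices of $g$ and the $Q(m,q_2)^{q_1^m}$ choices of the family $(h_{\bar u})$ then gives the bound. This is routine; I expect no obstacle.

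For (a), the plan is to reduce to the classification of $m$-ary quasigroups of order $3$. The substantive input (classical; it is the content behind \cite{LaywineMullen}) is that every $m$-ary quasigroup of order $3$ is linear over $\mathbb{Z}_3$, that is, of the form $d+c_1x_1+\dots+c_mx_m$ with $c_i\in\{1,2\}$ and $d\in\{0,1,2\}$ — equivalently, every binary quasigroup of order $3$ is linear (there are exactly $12$ Latin squares of order $3$, matching the $3\cdot 2\cdot 2$ linear forms) and every $m$-ary one with $m\ge 3$ is reducible. Granting this, distinct tuples $(d,c_1,\ldots,c_m)$ manifestly yield distinct quasigroups, so $Q(m,3)=3\cdot 2^m$. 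I would obtain the reducibility either by citation or by induction on arity, using that fixing an argument turns an order-$3$ quasigroup into an order-$3$ (hence linear) quasigroup and then a gluing argument; that inductive step is the only nontrivial point.

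For (b), (c), (d) I would reprove nothing. The lower bounds in (c) and (d) are recursive substitution constructions in the spirit of Lemma~\ref{l:Ph2} and of item (e): one partitions the $m$ arguments into small blocks and makes independent gadget choices along a product structure, the exponents $3^{m/3-0.072}$ and $((q^2-4q+3)/4)^{m/2}$ being the optimized per-block growth rates established in \cite{KPS:ir}, \cite{PotKro:numberQua}. The genuinely hard statement is the asymptotic \emph{equality} in (b): its lower bound is again such a construction, but the matching upper bound rests on the delicate structural analysis of order-$4$ quasigroups carried out in \cite{PotKro:asymp}, and that upper bound is the step I expect to be the main obstacle — the one place where I would rely entirely on the cited work rather than attempt a reconstruction.
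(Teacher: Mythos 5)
Your proposal is correct, and it actually does more work than the paper, which offers no proof of this lemma at all: the lemma is purely a compilation of results quoted from the cited references, with the citations serving as the entire justification. Your deferral of (b), (c), (d) to the originals therefore matches the paper exactly (and your characterization of where the difficulty lies in (b) — the upper bound — is accurate). Your self-contained arguments for (a) and (e) are sound and are the standard ones: the fibered-product construction for (e) is precisely the proof in the cited quasigroup paper (the quasigroup verification and the injectivity of $(g,(h_{\bar u}))\mapsto f$ both go through as you describe), and (a) reduces, as you say, to the classical fact that every $m$-ary quasigroup of order $3$ is linear over $\mathbb{Z}_3$, whence the count $3\cdot 2^m$. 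The one point you flag yourself — the inductive reducibility/linearity step for order $3$ — is indeed the only nontrivial content of (a), and citing it (as the paper does) or proving it by induction on arity are both acceptable. Minor note: the exponents in (c) and (d) as printed in the paper use $n$ where $m$ is clearly intended; your silent correction is right.
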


For odd $q\geq 5$, the number of codes given by
Lemmas~\ref{l:qua-code} and~\ref{l:Nqua}(c,d) improves the constant
$c$ in the lower estimation of form $e^{e^{cn(1+o(1))}}$ for the
number of perfect codes, in comparison with the last known lower
bound \cite{Los06}. Informally, this can be explained in the
following way: the construction in \cite{Los06} can be described in
terms of mutually independent small modifications of the linear
multary quasigroup of order $q$, while the lower bounds in
Lemma~\ref{l:Nqua}(c,d) are based on a specially-constructed
nonlinear multary quasigroup that allows a lager number of
independent modifications. For $q= 3$ and $q=2^s$, the number of
codes given by Lemmas~\ref{l:qua-code} and~\ref{l:Nqua}(a,b,e) also
slightly improves the bound in \cite{Los06}, but do not affect on
the constant $c$.


\providecommand\href[2]{#2} \providecommand\url[1]{\href{#1}{#1}} \def\DOI#1{
  {DOI}: \href{http://dx.doi.org/#1}{#1}}\def\DOIURL#1#2{ {DOI}:
  \href{http://dx.doi.org/#2}{#1}}\providecommand\bbljun{June}

\bigskip
\noindent
O. Heden\\
Department of Mathematics,
KTH\\
S-100 44 Stockholm,
Sweden\\
email: \texttt{olohed@math.kth.se}

\bigskip
\noindent
D. Krotov\\
Sobolev Institute of Mathematics\\
and\\
Mechanics and Mathematics Department,
Novosibirsk State University\\
Novosibirsk, Russia\\
email: \texttt{krotov@math.nsc.ru}

\end{document}